\documentclass[11pt]{article} 

\usepackage{amsmath,amsthm,amstext,amssymb,amsfonts,amscd,graphicx,bbm,algorithmic,algorithm} 

\voffset -0.5in
\addtolength{\textheight}{1in} 

 
\newtheorem{lemma}{Lemma}
\newtheorem{thm}{Theorem}

\newtheorem{defn}{Definition}


\newcommand*{\diag}{\mbox{\rm diag}}

\newcommand*{\supp}{\mbox{\rm supp}}

\newcommand*{\0}{\mathbf{0}}
\newcommand*{\1}{\mathbf{1}}

\newcommand*{\minimize}{\text{minimize}}
\newcommand*{\st}{\text{subject to}}
\newcommand*{\R}{\mathbbm{R}}

\title{Existence of Positive Steady States for Mass Conserving and Mass-Action
	Chemical Reaction Networks with a Single Terminal-Linkage Class %
   \thanks{\today.}}

\author{Santiago Akle%
   \thanks{Institute for Computational and Mathematical Engineering,
           Stanford University, Stanford, CA 94305.
					 Research supported in part by the U.S. Department of Energy (Office
					 of Advanced Scientific Computing Research and Office of Biological
					 and Environmental Research) as part of the Scientific Discovery
					 Through Advanced Computing program, grant DE-SC0002009.
           Email: {\tt akle@stanford.edu}, {\tt onkar@stanford.edu}, 
                  {\tt ntaheri@stanford.edu}.}
   \and Onkar Dalal\footnotemark[2]
	 \and Ronan M.T. Fleming\thanks{Center for Systems Biology, University of
	 Iceland, Sturlugata 8, Reykjavik 101, Iceland. Email: {\tt
	 ronan.mt.fleming@gmail.com}}
   \and Michael Saunders\footnotemark[4] 
   \and Nicole Taheri\footnotemark[2]
   \and Yinyu Ye%
   \thanks{Department of Management Science and Engineering,
           Stanford University, Stanford, CA 94305. Research supported in part by NSF grant GOALI 0800151
           and DOE grant DE-SC0002009.
           Email: {\tt saunders@stanford.edu}, {\tt yinyu-ye@stanford.edu}.}}

\date{}

\begin{document}
\maketitle

\begin{abstract}
	We establish that mass conserving, single terminal-linkage networks of
	chemical reactions admit positive steady states regardless of network
	deficiency and the choice of reaction rate constants. This result holds for
	closed systems without material exchange across the boundary, as well as for
	open systems with material exchange at rates that satisfy a simple sufficient
	and necessary condition. 
	
	Our proof uses a fixed point of a novel convex optimization formulation to
	find the steady state behavior of chemical reaction networks that satisfy
	the law of mass-action kinetics. A fixed point iteration can be used to
	compute these steady states, and we show that it converges for weakly
	reversible homogeneous systems. We report the results of our algorithm on
	numerical experiments. 
\end{abstract}

\section{Introduction} 

One of the interests of systems biology is to deterministically model chemical
reaction networks. Models of such systems based on mass-action kinetics depend
on the kinetic parameters of the system.  However, measuring these parameters
experimentally is difficult and error-prone.  Thus, we seek properties of
chemical reaction networks that are independent of kinetic parameters.

In this work, we address the issue of existence of positive steady states,
i.e., positive concentrations of species that will stay constant under the
system's dynamics.  In particular, we tackle the case in which a directed graph
of chemical reactions forms a \emph{strongly connected component} and the
reactions conserve mass. We prove the existence of positive equilibria for such
networks when they are closed systems, and extend our methods to open systems
where complexes are exchanged across the boundary at certain rates.  How best
to compute such a steady state remains uncertain; however, for closed systems
we suggest a fixed point algorithm and provide results of our numerical
experiments for several cases. 

\subsection{Background}
\label{background}

\textit{Chemical Reaction Network Theory} (CRNT), a mathematical theory for
this problem, has its roots in the seminal work by Fritz Horn, Roy Jackson, and
Martin Feinberg in \cite{GMAK, uniqueEPandLyapunov, necc-suff-CB, deficiency0,
deficiency1}. We build on the notation used in these works and summarized by
Gunawardena et al.\ in \cite{gunawardena}. The system consists of a collection
of species reacting collectively in some combination to give another
combination of species in a network of chemical reactions.  Let $\mathcal{S}$
be a set of $m$ species and $\mathcal{C}$ be a set of $n$ complexes. The
relation between species and complexes can be written as a non-negative matrix
$Y\in\R^{m\times n}$, where column $y_j$ represents complex $j$, and $Y_{i j}$
is the multiplicity of species $i$ in complex $j$.  For example, the
multiplicity of the species NaCl in the complex (H$_2$O + 2NaCl) is 2.  

A reaction network is represented by the underlying weighted directed graph
$G(V,E)$, where each node in $V$ represents a complex, each directed edge
$i\rightarrow j$ denotes a reaction using $i$ to generate $j$, and the
positive edge weight $k_{i\rightarrow j}$ is the reaction rate.  The matrix $A
\in \R^{n \times n}$ is the weighted adjacency matrix of the graph, where
$A_{ij}=k_{i\rightarrow j}$.  Define $D := \diag(A\1)$, where $\1$ is the
vector of all ones, and $A_k := A^T-D$.  The elements of this new matrix will
be $(A_k)_{ij} = k_{j \rightarrow i}$ for $i \neq j$, and $(A_k)_{ii} = -\sum_j
k_{j \rightarrow i}$, so that $A_k^T \1 = \0$.  

Let $c \in \R^m$ be the vector of concentrations of each species, and $b \in
\R^m$ the vector of species exchange rates across the network boundary. We
establish necessary and sufficient conditions on the external exchange rates
$b$ in order to determine the existence of steady state concentrations, and to
be able to compute it given a specific $b$.  

We define $\psi(c):\R^m_+\rightarrow \R^n$ to be a nonlinear function that
captures mass-action kinetics:
\[
\psi_j(c) = \prod_i\,c_i^{Y_{ij}}.
\] 
The change in concentration over time can be described by the system of
ordinary differential equations 
\[
\dot{c} = YA_k\psi(c) - b.
\] 
\noindent Hence, steady state concentrations for a chemical reaction network
are any non-negative vector $c^\star\in\R^m$ such that $YA_k\psi(c^\star)=b$.
Equivalently, a pair $(c^\star,v^\star)$ with $c^\star\in\R^m$ and
$v^\star\in\R^n$ will be a steady state if it satisfies the conditions
\begin{align} 
  YA_kv^\star &=b \label{fb}\tag{FB} \\ \psi(c^\star) &= v^\star
  \label{mak}\tag{MA} 
\end{align}
or \emph{flux-balance} and \emph{mass-action}, respectively. Thus, finding 
steady state concentrations is equivalent to finding a vector $v^\star$ that
satisfies both $\eqref{fb}$ and \eqref{mak} for some vector $c^\star\in\R^m$.

Observe that if $(c^\star,v^\star)$ is a positive steady state, then from the
definition of $\psi(c^\star)$,
\begin{align}
  Y^T\log(c^\star)= \log(\psi(c^\star))= \log(v^\star).
  \label{mak-alt}
	\tag{MA-log}
\end{align} 
We refer to this alternative condition as the logarithmic form of the
mass-action condition \eqref{mak}.  In this notation, systems with the property
of mass conservation can be characterized by the following definition. 
\begin{defn}
	A chemical reaction network is $\emph{mass conserving}$ if and only if there
	exists a positive vector $e\in\R^m$ such that 
	\begin{align}
	 e^TYA_k=0,
	  \label{consis}
	\end{align}
	where $e$ denotes the molecular weights of the species (or atomic weights if
	the species are elements).  
\end{defn}

The connectedness of the networks is captured in the following definition of a
terminal-linkage class.  
\begin{defn} 
	A \emph{terminal-linkage class} is defined as a set of complexes
	$\mathcal{L}$ such that for any pair of complexes $(i,j) \in \mathcal{L}$
	there exists a directed path in the graph $G$ that leads from $i$ to $j$.  
\end{defn} 

We further restrict our analysis to a class of weakly reversible networks.
\begin{defn} 
	A chemical reaction network is \emph{weakly reversible} if it is formed
	exclusively by one or more terminal-linkage classes.
\end{defn} 

A reaction network that consists of exactly one terminal-linkage class is
called a \emph{single terminal-linkage network}. Reversibility, at least in a
weak sense, is a prerequisite for steady states with positive concentrations
for all species, as suggested by simple examples like a single non-reversible
reaction.

Next we define a stoichiometric subspace and deficiency for a network.
\begin{defn} 
	A \emph{stoichiometric subspace} $S$ is the subspace defined by the span of
	vectors $y_{j}-y_{i}$, where $y_{j}, y_{i}$ are the columns of $Y$
	representing complexes $i$ and $j$, for each reaction pair $i \rightarrow j$
	in the network. 
\end{defn} 

\begin{defn}
	The \emph{deficiency} of a network is defined as $\delta = n - t - s$, where $t$ is
	the number of terminal-linkage classes and $s$ is the dimension of the
	stoichiometric subspace, also known as the stoichiometric compatibility class.
\end{defn}

In one of their early works, Horn and Jackson \cite{GMAK} analyzed mass-action
kinetics for closed systems (with $b=0$) and defined a class of equilibrium
points called \emph{complex-balanced equilibria}, and defined systems admitting
such an equilibrium to be \emph{complex-balanced systems}. These closed systems
are shown to satisfy the \emph{quasi-thermostatic} and
\emph{quasi-thermodynamic} conditions regardless of the kinetic rate constants.
Following this, Horn \cite{necc-suff-CB} also proved necessary and sufficient
conditions for existence of a \emph{complex-balanced} equilibrium. In
\cite{uniqueEPandLyapunov}, Feinberg and Horn used the existence of a Lyapunov
function to show the uniqueness of the positive steady state in each
stoichiometric compatibility class, which is equivalent to specifying all the
conserved quantities of a system. Later Feinberg \cite{deficiency0,
deficiency1} proved two theorems, now famously known as Deficiency 0-1
theorems, that provide the analysis of positive steady states for a
class of networks with deficiency $0$ or networks with deficiency $1$ but with
each terminal-linkage class having deficiency less than $1$. For this
restricted class of closed systems, the existence of a positive steady state is
given by Perron-Frobenius theory for a positive eigenvector.  Other work in
this area is from the perspective of dynamical systems and aimed toward proving
two open conjectures: \emph{Global Attractor Conjecture} and \emph{Persistence
Conjecture} \cite{Anderson-GAC}. Another approach using parametrized convex
optimization to compute a non-equilibrium steady state is given in
\cite{fleming-opt}.

To the best of our knowledge, the vast majority of CRNT research studies closed
\textit{complex-balanced systems}, which, by definition, admit a
\textit{complex-balanced equilibrium}. In the notation above, a
complex-balanced equilibrium exists when the vector of concentrations $c$ satisfies
$A_k\psi(c)=0$, i.e., the vector $\psi(c)$ belongs to the null space of $A_k$.
It can be shown that a network with linearly independent complexes will have
deficiency $\delta = 0$. However, if some of the complexes are linearly
dependent (as shown in the example in Section 3), there are systems that are
not complex-balanced yet admit concentrations in equilibria where $A_k \psi(c)$
is in the null space of $Y$ and $A_k\psi(c) \neq 0$.  Though the condition of
\emph{complex-balance} is sufficient for \emph{thermodynamic} consistency,
\cite{GMAK} shows that it is not necessary. Also, for open systems with
material exchange across the boundary, \emph{complex-balance} is not defined.
In order to handle open systems, these works hint at extending the system using
a \emph{pseudo $0$-complex} and adding \emph{pseudo reactions}.  However, it is
unclear how to choose the \emph{pseudo kinetic rates} such that the positive
eigenvector solution of the extended system will achieve the given external
exchange rates $b$. From the point of view of systems biology and bio-chemical
engineering, analyzing the behavior of a cell under different exchange
conditions $b$ is very important to control and engineer the cell, for example
studying the desired effects in pharmacology, or producing specific metabolites
in bioreactors.

\vspace{0.2in}
In this paper, we extend the previous work on two accounts: 1) we prove
existence of positive equilibria in closed systems for some reaction networks
that do not satisfy the necessary conditions of the Deficiency 0-1 theorems
(are not necessarily \emph{complex-balanced}), and 2) we provide a necessary
and sufficient condition on the external exchange rate $b$ for some open
systems to admit a positive steady state. We use a fixed point of a convex
optimization problem, with an objective function similar to the Helmholtz
function defined in \cite{GMAK}. The fixed point of this mapping gives the
required steady state. We prove the existence of a positive steady state for
any weakly reversible chemical reaction network with a \emph{single
terminal-linkage class}. We strongly believe that this can be extended to
systems with \emph{multiple terminal-linkage classes}, as supported by our
computational results for randomly generated networks. Section
\ref{example-network} gives a detailed analysis of a toy network to emphasize
this claim.

\section{A Fixed Point Model} 
\label{section:fp-model}

Our main result establishes that for any set of positive reaction rates
$k\in\R^n$ and any $b$ in the range of $YA_k$, a single terminal-linkage
network will admit a positive solution pair $(c,v)$ that satisfies the laws
\eqref{fb} and \eqref{mak}.  We show this by defining a positive fixed point of
a convex optimization problem, and establishing an equivalence between the
positive fixed point and positive solution to the equations.

We construct a fixed point mapping of a linearly constrained optimization
problem such that the logarithmic form of the mass-action equation
\eqref{mak-alt} is an optimality condition, and hence any solution to this
optimization problem will also satisfy \eqref{mak-alt}.

To define the mapping, let $b=YA_k\eta$ for some $\eta \in \R^n$ and observe
that for arbitrary $s \in \R^n$ we can write $b = YA^T(\eta +s) - YD(\eta +
D^{-1}A^Ts)$.  In particular, we choose $s$ positive and large enough so
that $\eta^+ := \eta+s$ and $\eta^- := \eta + D^{-1}A^Ts$ are both positive.
Also, from Definition \ref{consis}, $e^Tb = 0$ and thus
\begin{equation}
  e^TYD\eta^- = e^TYA^T\eta^+.
  \label{massbalanceb}
\end{equation}
Define $\mu:=(r,r_0) \in \R^{m+1}$ to be a vector parameter.  Observe that if
the parametric convex optimization problem
\begin{equation}
	\begin{array}{lll}
  \underset{(v,v_0)\in\R^{n+1}}{\minimize} & v^T D(\log(v)-\1) + v_0(\log v_0 -1) \\
	\st &  YD v + YA^T\eta^+ v_0 = YA^Tr + YD\eta^-r_0 &:\ y \\
	    &\quad (v,v_0) \ge 0                     
	\end{array}
	\label{convex-fix}
\end{equation}
has a positive solution $(v^\star(\mu),v_0^\star(\mu))$, then the optimality
conditions
\begin{equation}
	\begin{array}{rl}
	YDv^\star(\mu)  + YA^T\eta^+ v^\star_0  &= YA^Tr + YD\eta^-r_0  \\
	DY^T y^\star(\mu) &= D\log(v^\star(\mu))  \\
	(YA^T\eta^+)^Ty^\star(\mu)   &= \log(v^\star_0(\mu)) \\
	(v^\star(\mu),v_0^\star(\mu)) &\ge 0
	\end{array}
	\label{optcon}
\end{equation}
are well defined. Since $D$ is nonsingular, the second optimality condition is
equivalent to \eqref{mak-alt}, for $c^\star(\mu) := e^{y^\star(\mu)}$, where
the exponent is taken element-wise.  Hence, the equation \eqref{mak-alt} holds
and $c^\star(\mu)$ satisfies mass-action. We show that for some parameter
$\mu$, the equality 
\begin{equation}
 YA_kv^\star=v^\star_0b
  \label{scaled-fb}
\end{equation} is
satisfied. This implies that for $b=0$, both \eqref{fb} and \eqref{mak-alt} are
satisfied, and the solution is attained. For the case where $b\neq 0$, we can
construct a corresponding solution so that $\eqref{fb}$ holds.

Note that the nonlinear program \eqref{convex-fix} is strictly convex, so for
any feasible $(r,r_0)$ there is a unique minimizer. That is, the mapping 
\begin{equation}
  (r,r_0) \rightarrow (v^\star(\mu),v^\star_0(\mu))
  \label{mapping}
\end{equation} 
is well defined.  If $\hat \mu = (r, r_0)$ is a fixed point of \eqref{mapping}, then
the linear equality constraint in \eqref{convex-fix} implies
\[
	 YDv^\star(\hat \mu)+YA^T\eta^+v^\star_0(\hat \mu) = YA^Tv^\star(\hat \mu) +
	 	YD\eta^-v_0^\star(\hat \mu)
\]
or, equivalently,
\[
    Y A_k v^\star(\hat \mu) = Y(A^T-D)v^\star(\hat \mu) = v^\star_0(YA^T\eta^+ - 
			YD\eta^-) = v^\star_0(\hat \mu)b.
\]
Therefore, if such a fixed point exists, the solution $v^\star(\hat\mu)$ at this
fixed point will satisfy $\eqref{scaled-fb}$.  For simplicity, we henceforth
refer to the optimal solution variables $(v^\star(\mu),v^\star_0(\mu)),
y^\star(\mu)$ as $(v^\star,v^\star_0), y^\star$, but acknowledge their
dependence on $\mu$.

\begin{thm} 
	\label{fp-exist-map} 
	For any mass conserving, mass-action chemical reaction network and any choice
	of rate constants $k>0$, there exist nontrivial fixed points for the mapping
	\eqref{mapping}.
\end{thm}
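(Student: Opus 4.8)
The plan is to prove Theorem~\ref{fp-exist-map} via Brouwer's fixed point theorem, applied to the mapping \eqref{mapping} restricted to a compact convex slice of the nonnegative orthant that the mapping leaves invariant. Mass conservation is what produces this invariant slice.

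\emph{Step 1: a conserved linear functional.} Left-multiply the linear equality constraint of \eqref{convex-fix} by $e^{T}$, where $e>0$ is the molecular-weight vector of \eqref{consis}. Because $A_k=A^{T}-D$ and $e^{T}YA_k=0$, we get $e^{T}YA^{T}=e^{T}YD$; combined with \eqref{massbalanceb} this turns the constraint into
\[
  (e^{T}YD)\,v^{\star} + (e^{T}YD\eta^{+})\,v_0^{\star}
   \;=\; (e^{T}YD)\,r + (e^{T}YD\eta^{+})\,r_0 .
\]
Hence the linear functional $L(x,x_0):=(e^{T}YD)\,x + (e^{T}YD\eta^{+})\,x_0$ on $\R^{n+1}$ is preserved by the mapping \eqref{mapping}. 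Under the paper's standing assumptions ($D$ nonsingular, so $D\succ0$, and every complex carries positive molecular weight, so $e^{T}Y>\0$ entrywise), the row vector $e^{T}YD$ is entrywise positive and $e^{T}YD\eta^{+}>0$. Therefore, for any $\rho>0$,
\[
   K_{\rho}\;:=\;\bigl\{(r,r_0)\ge\0 : L(r,r_0)=\rho\bigr\}
\]
is a nonempty compact convex polytope, $\0\notin K_{\rho}$, and — since the optimizer of \eqref{convex-fix} is nonnegative and $L$ is conserved — the mapping \eqref{mapping} maps $K_{\rho}$ into itself. Any fixed point in $K_{\rho}$ is thus nontrivial.

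\emph{Step 2: the mapping is a continuous self-map of $K_{\rho}$.} First, \eqref{convex-fix} is feasible for every $(r,r_0)\ge\0$: the point $v:=D\inv A^{T}r+\eta^{-}r_0$, $v_0:=0$ is nonnegative (as $D\inv,A^{T},\eta^{-}$ are nonnegative) and satisfies the constraint, since $YDv=Y(A^{T}r+D\eta^{-}r_0)$. The separable objective $v^{T}D(\log v-\1)+v_0(\log v_0-1)$ is continuous and coercive on the closed nonnegative orthant and strictly convex, so — as already noted after \eqref{mapping} — the minimizer exists and is unique for every $\mu=(r,r_0)$. Continuity of $\mu\mapsto(v^{\star},v_0^{\star})$ then follows from Berge's maximum theorem: the feasible-set correspondence $\mu\mapsto\{(v,v_0)\ge\0 : YDv+YA^{T}\eta^{+}v_0=YA^{T}r+YD\eta^{-}r_0\}$ is upper semicontinuous on the bounded region that contains all minimizers and lower semicontinuous by Hoffman's bound (the right-hand side being affine in $\mu$), the objective is jointly continuous, and the argmin is single-valued. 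Brouwer's theorem applied to this continuous self-map of the compact convex set $K_{\rho}$ yields a fixed point $\hat\mu\in K_{\rho}$, which by Step~1 is nontrivial.

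\emph{Expected obstacle.} The one genuinely technical point is the stability (continuity) of the argmin on the relative boundary of $K_{\rho}$, handled by the Hoffman-bound argument for lower semicontinuity of the feasible-set correspondence; everything else is routine. I would also remark that strengthening ``nontrivial'' to ``entrywise positive'' — which is what makes the logarithmic optimality conditions in \eqref{optcon} literally meaningful and lets one read off \eqref{scaled-fb} as in the discussion preceding the theorem — is a separate matter: it uses the barrier behavior of the objective (the derivative of each term $D_{ii}v_i(\log v_i-1)$, and of $v_0(\log v_0-1)$, tends to $-\infty$ as the coordinate tends to $0^{+}$), which keeps the minimizer off the coordinate hyperplanes provided the feasible polytope meets the open orthant, and that last fact is where additional structural hypotheses (such as a single terminal-linkage class) would be invoked.
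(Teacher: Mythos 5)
Your proof is correct and takes essentially the same route as the paper: Brouwer's theorem applied to the mapping \eqref{mapping} restricted to the mass-conservation slice $\{(v,v_0)\ge\0:\ e^TYDv+e^TYA^T\eta^+v_0=\gamma\}$, with invariance following from $e^TYA_k=0$ and \eqref{massbalanceb}, and nontriviality from $\0$ not lying in the slice. You additionally supply details the paper only asserts (feasibility of \eqref{convex-fix} for every nonnegative parameter, continuity of the argmin, and the positivity of $e^TYD$ needed for compactness), and you correctly flag that positivity of the fixed point is a separate issue deferred to the single terminal-linkage analysis.
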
 

\begin{proof} 
	Brouwer's fixed point theorem states that any continuous mapping from a
	convex and compact subset of a Euclidean space $\Omega$ to itself must have
	at least one fixed point. 

	Let $(v^\star,v^\star_0)$ be defined as in \eqref{mapping} and let $\gamma$
	be a positive fixed scalar. Define the set 
	\[
	\Omega = \left\{ (v,v_0)\in	\Re^{n+1} \; : \; (v,v_0)\geq 0, 
		\quad e^TYDv + e^TYA^T\eta^+v_0 = \gamma\right\},
	\]  
	where $e$ is defined in \eqref{consis}.  According to Brouwer's fixed point
	theorem, if the parameter $(r,r_0) \in \Omega$ ensures that the corresponding
	solution to the optimization problem $(v^\star,v_0^\star) \in \Omega$, then
	there is a fixed point such that the parameter and the solution are equal,
	i.e., there exists a $\mu$ such that $\mu = (r,r_0) =
	(v^\star(\mu),v_0^\star(\mu))$.  

	The set $\Omega$ is bounded and formed by an intersection of closed convex
	sets, and hence is convex and compact.  Moreover, the mapping $\mu\rightarrow
	(v^\star,v^\star_0)$ is continuous. Since problem \eqref{convex-fix} is
	feasible for any $\mu\in\Omega$, the mapping $\Omega \ni \mu \rightarrow
	(v^\star,v^\star_0)$ is well defined.

	To show that the image of $\Omega$ under the mapping $(r,r_0) \rightarrow
	(v^\star,v_0^\star)$	is in $\Omega$, first observe that by the bounds in
	\eqref{convex-fix}, $(v^\star,v^\star_0) \ge 0$.  Using the equality
	constraints, Definition \eqref{consis} and Equation \eqref{massbalanceb},
	we have
	\begin{align}
	e^T YD v^\star + e^T YA^T \eta^+ v^\star_0 &= e^T YA^T r + e^T YD \eta^- r_0 \notag 
	\\ & = e^T YD r + e^T YA^T \eta^+r_0=\gamma, \notag 
	\end{align} 
	and thus $(v^\star,v_0^\star)\in \Omega.$

	Therefore, under the mapping $(r,r_0) \rightarrow (v^\star,v_0^\star)$,
	$(r,r_0) \in \Omega$ implies $(v^\star,v^\star_0) \in \Omega$, and the
	mapping must have a fixed point.  Moreover, since $\Omega$ does not contain
	the zero vector, the fixed point(s) are nontrivial.
 
\end{proof}

Note that the value of $YDv^\star+YA^T\eta^+v^\star_0$ is the rate of
consumption of each chemical species and $YA^Tv^\star+YD\eta^-v^\star_0$ is the
rate of production of each chemical species. At the fixed point, the equality
$YDv^\star + YA^T\eta^+v^\star_0= YA^Tv^\star+YD\eta^-v^\star_0$ defines a
steady state. The set $\Omega$ defines the parameter $\gamma=e^T(YDv^\star +
YA^T\eta^+v^\star_0)$; since the vector $e$ can be interpreted as an assignment
of relative mass to the species, $\gamma$ can be interpreted as the total
amount of mass that reacts per unit time at the steady state.  Therefore,
looking for fixed points in $\Omega$ corresponds to looking for steady states
where the amount of mass that reacts in the system is prescribed.

We have established the existence of a nontrivial fixed point $\mu$ of the
mapping $\Omega \ni \mu \rightarrow (v^\star,v^\star_0)\in \Omega$. Moreover,
we have shown that when the associated minimizer $(v^\star,v^\star_0)$ is
positive, it is a solution to \eqref{mak} and to
$YA_kv^\star=v^\star_0b$.  However, in the case when some entries of $v^\star$
are zero, the objective function of \eqref{convex-fix} is non-differentiable
and we cannot use the optimality conditions to show that \eqref{mak} holds.

\subsection{Positive fixed points in single terminal-linkage networks}  
	\label{section::single-linkage}

We now consider the case when the network is formed by a single
terminal-linkage class and show that if $\hat \mu$ is a fixed point of the
mapping \eqref{mapping}, the minimizer $(v^\star(\hat
\mu),v^\star_0(\hat\mu))$, and therefore $\hat\mu$, is positive.

Lemma $\ref{maximum-support}$ shows that if problem $\eqref{convex-fix}$ has a
feasible point with support $J$, the minimizer $(v^\star,v^\star_0)$ will have
support at least $J$. Lemma \ref{positive-feasible} uses the single
terminal-linkage class hypothesis to show that at a fixed point, there is a
positive feasible point.  These two Lemmas imply that at a fixed point
$\hat{\mu}$, the minimizer will be positive.  Finally Theorem \ref{thm:scaling}
shows that if $\hat{v}_0 \neq 1$ at the solution, we can construct another
solution for which $\hat{v}_0=1$. This establishes that there is a nontrivial
steady state for the network. 

To complete the argument we must prove Lemmas \ref{maximum-support},
\ref{positive-feasible} and Theorem \ref{thm:scaling}.

\begin{lemma} 
	The support of any feasible point of Problem \eqref{convex-fix} is a subset
	of the support of the minimizer $(v^\star,v^\star_0)$. 
	\label{maximum-support}
\end{lemma}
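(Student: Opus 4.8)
The plan is to exploit the ``entropy barrier'' structure of the objective of \eqref{convex-fix}. Writing $\phi(x) := x\log x - x$ with the convention $\phi(0) := 0$, the objective is the separable sum $F(v,v_0) = \sum_{i} D_{ii}\,\phi(v_i) + \phi(v_0)$, where each $\phi$ is continuous on $[0,\infty)$, strictly convex, and has $\phi'(x) = \log x \to -\infty$ as $x \downarrow 0$; moreover $D = \diag(A\1)$ is nonsingular (as assumed throughout, every complex has an outgoing edge), so the coefficients $D_{ii}$ are positive. The point is that the blow-up of $\phi'$ at the boundary makes it strictly advantageous to move off any coordinate face of $\{(v,v_0)\ge 0\}$ whenever feasibility allows it.

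Concretely, I would fix a feasible point $(w,w_0) \ge 0$ with support $J := \supp(w,w_0)$ and argue by contradiction, assuming the (unique) minimizer $(v^\star,v^\star_0)$ has a zero coordinate at some index belonging to $J$. Since the feasible set of \eqref{convex-fix} is an affine subspace intersected with the nonnegative orthant, it is convex, so the segment $(v(t),v_0(t)) := (1-t)(v^\star,v^\star_0) + t(w,w_0)$ is feasible and nonnegative for all $t \in [0,1]$. Put $g(t) := F(v(t),v_0(t))$. By continuity of $\phi$ on $[0,\infty)$, $g$ is continuous on $[0,1]$ with $g(0) = F(v^\star,v^\star_0)$, and for $t \in (0,1)$ every coordinate that is not identically zero along the segment is strictly positive, so $g$ is differentiable there with
\[
g'(t) = \sum_{i} D_{ii}\,\log\!\bigl((1-t)v^\star_i + t w_i\bigr)(w_i - v^\star_i) + \log\!\bigl((1-t)v^\star_0 + t w_0\bigr)(w_0 - v^\star_0).
\]

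The key step is to check that $g'(t) \to -\infty$ as $t \downarrow 0$. Every index $i$ with $v^\star_i > 0$ contributes a term bounded near $t = 0$; every index with $v^\star_i = w_i = 0$ contributes nothing; and by assumption there is at least one index in $J$ with $v^\star_i = 0$, hence $w_i > 0$, whose contribution is $D_{ii}\,w_i\,\log(t w_i) \to -\infty$ since $D_{ii} w_i > 0$ (the same applies if the offending coordinate is the $v_0$ coordinate). As only finitely many bounded terms are added to this divergent one, $g'(t) \to -\infty$. Hence there is $\delta > 0$ with $g'(t) \le -1$ on $(0,\delta)$, and applying the mean value theorem on $[\epsilon,t]$ and letting $\epsilon \downarrow 0$ gives $g(t) \le g(0) - t < g(0) = F(v^\star,v^\star_0)$ for $t \in (0,\delta)$, contradicting the optimality of $(v^\star,v^\star_0)$. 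Therefore no index of $J$ vanishes at the minimizer, i.e.\ $J \subseteq \supp(v^\star,v^\star_0)$, which is the claim.

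The only genuinely delicate part is the boundary analysis in the third step: one must verify that $g$ extends continuously to $t = 0$ (so that $g(0)$ is the value at the minimizer), that $g$ is differentiable for small $t > 0$ (which holds because the only coordinates that reach $0$ along the segment are those identically zero on it), and that the single divergent logarithmic term is not cancelled by the finitely many bounded contributions. Everything else is routine, and notably nothing about the matrices $Y$, $A_k$ or the structure of the network is used beyond feasibility of the two points and the positivity of the diagonal of $D$.
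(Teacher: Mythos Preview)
Your proof is correct and uses the same core mechanism as the paper: the directional derivative of the separable entropy objective diverges to $-\infty$ whenever one moves off a coordinate face, so no minimizer can sit on a face that another feasible point avoids. The paper organizes the argument slightly differently---it first fixes a feasible point $\tilde v$ of \emph{maximal} support, takes an arbitrary feasible direction $z$, and shows that along the whole feasible segment $[\ell,u]$ the derivative $g'(\alpha)$ runs from $-\infty$ to $+\infty$, forcing the one-dimensional minimum into the interior; the contradiction then comes from writing the putative minimizer as a boundary point of such a segment. Your version is the more direct and standard barrier argument: you take the convex combination from the minimizer to the given feasible point and only need the one-sided blow-up $g'(t)\to-\infty$ as $t\downarrow 0$. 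Both arguments are equivalent in spirit; yours avoids the detour through ``maximal support'' points and the two-sided analysis, at no cost in generality.
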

\begin{proof}
	Let $\tilde{v}\in \Re^{n+1}$ be any of the feasible points with the largest
	support and let $z$ be any feasible direction at $\tilde{v}$. By
	construction, for all $\alpha$ in some interval $[\ell,u]$ the points
	$v_\alpha:= \tilde{v} + \alpha z$ are non-negative and feasible.  The
	interval can be chosen so that when $\alpha=l$ and when $\alpha=u$, one new
	bound constraint becomes active.  This implies that $\supp(v_\ell)$ and
	$\supp(v_u)$ are strictly contained in $\supp(\tilde{v})$, and
	$\supp(v_\alpha) = \supp(\tilde{v})$ for $\alpha \in (\ell,u)$.
	
	Without loss of generality, we assume $\ell<0<u$, since $\ell$ and $u$ will
	not be of the same sign; if $\ell = 0$ and $u>0$, any point $v_{\alpha}$ can
	be written as a convex combination of $\tilde{v}$ and $\tilde{v}+uz$, and
	thus has support as large as $\tilde{v}$.

	Define the univariate function 
	\begin{equation}   
		g(\alpha) := \phi(\tilde{v} + \alpha z), 
		\label{univariate}
	\end{equation} 
	where $\phi$ is the objective function of \eqref{convex-fix}.  We will
	establish that as $\alpha\rightarrow l$ the derivative $g'(\alpha)\rightarrow
	-\infty$, and as $\alpha\rightarrow u$ the derivative $g'(\alpha)\rightarrow
	\infty$.  Thus, by the mean value theorem, there must exist a zero of the 
	function $g$ in the	interior of the interval $[l,u]$.  Since this function is
	strictly convex, if a stationary point exists in the interior of the
	interval, the function value at the stationary point must be smaller than at
	the boundary.

	Observe that if we let $d_i$, for $i\in[1,\dots n]$, be the diagonal entries
	of $D$ and $d_{n+1}=1$, we can write 
  \begin{align*}
	 g(\alpha) &=\sum_{i=1}^{n+1} (\tilde{v} + \alpha z_i) d_i\log(\tilde{v}_i + \alpha z_i).
  \end{align*} 

	An important observation is that if some entry $\tilde{v}_j=0$ then $z_j=0$,
	otherwise $v_\alpha$ would have a larger support for some $\alpha \neq 0$.
	This implies that $(v_\alpha)_j=0$ for all entries where $\tilde{v}_j=0$.  If
	we let $J$ be the set of nonzero entries of $\tilde{v}$, and $L$ be the
	subset of $J$ formed by the entries that tend to zero as $\alpha \rightarrow
	l$, then 
   \begin{align*} 
	 	g'(\alpha) &=\sum_{i\in J} z_i d_i(\log(\tilde{v}_i + \alpha z_i)) \\            
		&= \sum_{i\in (L^c\cap J)} z_id_i(\log{(\tilde{v_i}+\alpha z_i)}) + \sum_{i\in L}
		 z_id_i(\log{(\tilde{v_i}+\alpha z_i)}). 
	\end{align*} 
	As $\alpha \rightarrow l$, the first summation will approach a finite value.
	Since $z_i>0$ for all $i\in L$, the entries in the logarithm of the second
	sum tend to zero and the term will diverge to $-\infty$.

	Similarly, let $U$ be the subset of $J$ formed by the entries that tend to zero as 
	$\alpha\rightarrow u$. Observe that for these entries, $z_i<0$ and
  \[
    g'(\alpha) = \sum_{i\in (U^c\cap J)} z_i d_i(\log{(\tilde{v_i}+\alpha z_i)})
               + \sum_{i\in U}   z_i d_i(\log{(\tilde{v_i}+\alpha z_i)}).
  \]
	The first sum will tend to a finite value and the second will diverge to
	$\infty$. 
	
	Now, assume that for some $\mu$ there is a feasible point
	$(\tilde{v},\tilde{v_0})$ with larger support than the minimizer
	$(v^\star,v^\star_0)$ of problem \eqref{convex-fix}.  Since
	$(v^\star,v_0^\star)$ has smaller support, we can write  $(v^\star,
	v_0^\star) = (\tilde{v},\tilde{v}_0) + \alpha^\star z$ where $\alpha^\star$
	is on the boundary of the corresponding feasible interval.  By the
	previous argument, there is a value of $\hat \alpha \neq \alpha^\star$ in the
	interior of the interval such that $(v^\star, v_0^\star) =
	(\tilde{v},\tilde{v}_0) + \hat \alpha z$ has a lower function value than
	$(v^\star,v^\star_0)$, contradicting its optimality.

	Therefore, by the mean value theorm, there must exist a stationary point of
	$g$ strictly in the interior of the interval $[l,u]$ at which the function
	value is smaller than at the boundary.  Moreover, the optimal point will have
	at least the support of any feasible point.  

  \end{proof}

\begin{lemma}
	If the network is formed by a single terminal-linkage class, when Problem
	\eqref{convex-fix} is parametrized by a fixed point $\hat\mu$,  there
	exists a positive feasible point $(\hat v,\hat v_0)$.
\label{positive-feasible}
\end{lemma}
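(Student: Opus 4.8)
The plan is to exhibit, at the given fixed point $\hat\mu=(\hat r,\hat r_0)\in\Omega$ of \eqref{mapping}, a strictly positive point $(\hat v,\hat v_0)$ that is feasible for \eqref{convex-fix} parametrized by $\hat\mu$; since the minimizer at a fixed point equals $\hat\mu$ itself, Lemma \ref{maximum-support} then forces $\hat\mu$ to be positive. Write $c:=YA^T\hat r+YD\eta^-\hat r_0$ for the right-hand side of the linear constraint. Because $(\hat r,\hat r_0)$ is feasible for its own problem, $c$ also equals $YD\hat r+YA^T\eta^+\hat r_0$, and the identity $YA_k\hat r=\hat r_0 b$ holds. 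The single terminal-linkage hypothesis means $G$ is strongly connected, so every complex has an incoming and an outgoing reaction; hence $D$ has a strictly positive diagonal and $A^T\eta^+$ is componentwise strictly positive, since $(A^T\eta^+)_i=\sum_j k_{j\to i}\eta^+_j$ is a sum containing at least one positive term. In particular $D^{-1}A^T\eta^+>0$.

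If $\hat r_0>0$, pick $t\in(0,\hat r_0)$ and set $\hat v:=\hat r+t\,D^{-1}A^T\eta^+$, $\hat v_0:=\hat r_0-t$. Substituting into $YD\hat v+YA^T\eta^+\hat v_0$, the two occurrences of $\pm t\,YA^T\eta^+$ cancel and leave $YD\hat r+\hat r_0\,YA^T\eta^+=c$, so $(\hat v,\hat v_0)$ is feasible; moreover $\hat v_0>0$, while $\hat v\ge t\,D^{-1}A^T\eta^+>0$ because $\hat r\ge 0$. This settles this case.

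If $\hat r_0=0$, then $c=YA^T\hat r=YD\hat r$ (using $YA_k\hat r=0$), and I first show $\hat r>0$. Since $\Omega$ excludes the origin, $\hat r\neq 0$; suppose $J:=\supp(\hat r)$ were proper, so $Z:=J^c\neq\emptyset$. Put $w:=A_k\hat r$, so $Yw=YA_k\hat r=0$, i.e., $w\in\ker Y$; for $i\in Z$ one has $(A_k\hat r)_i=\sum_{j\to i}k_{j\to i}\hat r_j\ge 0$, and strong connectivity forces at least one edge from $J$ into $Z$, so $w|_Z\ge 0$ with $w|_Z\neq 0$. Then $(\hat r+\lambda D^{-1}w,\,0)$ satisfies the constraint, since $YD(\hat r+\lambda D^{-1}w)+0=YD\hat r+\lambda Yw=c$; it is nonnegative for all small $\lambda>0$; and its support strictly contains $J$ — contradicting Lemma \ref{maximum-support}. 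Hence $\hat r>0$, so $A^T\hat r>0$ componentwise, and then $\hat v:=D^{-1}A^T\hat r-t\,D^{-1}A^T\eta^+$, $\hat v_0:=t$ is feasible (same cancellation) and strictly positive for every sufficiently small $t>0$. (In fact this feasible point has $\hat v_0>0$, so Lemma \ref{maximum-support} gives $v_0^\star>0$, and therefore $\hat r_0=0$ never actually occurs.)

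The main obstacle is precisely the $\hat r_0=0$ subcase: the construction used for $\hat r_0>0$ degenerates there ($\hat r_0-t<0$), so one must first exclude zero entries of $\hat r$, and that argument is exactly where the fixed-point identity $YA_k\hat r=0$, the strong connectivity of $G$, and the maximal-support property of the minimizer (Lemma \ref{maximum-support}) must be combined. Once the strict positivity of $A^T\eta^+$ — equivalently, strong connectivity of $G$ — is in hand, everything else reduces to a one-line substitution check.
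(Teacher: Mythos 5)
Your proof is correct, and it follows the same broad skeleton as the paper's: split on whether $\hat v_0$ is positive or zero, exploit strong connectivity, invoke Lemma \ref{maximum-support}, and finish by constructing a point with $\hat v_0>0$. The interesting divergence is in the hard subcase $\hat v_0=0$. The paper grows the support constructively: it shows $(D^{-1}A^T\hat v,0)$ is feasible, takes convex combinations so that the feasible support contains $\supp(\hat v)\cup\supp(A^T\hat v)$, and then iterates, using $((A^T)^p)_{ij}>0$ for large $p$ (strong connectivity) to conclude that some feasible point has full support on the first $n$ coordinates. You instead argue by contradiction in a single step: if $J=\supp(\hat r)$ were proper, the direction $D^{-1}A_k\hat r$ is nonnegative and not identically zero on $J^c$ (because strong connectivity forces an edge from $J$ into $J^c$), lies in the constraint's null space since $Yw=YA_k\hat r=0$, and hence yields a feasible point of strictly larger support, contradicting Lemma \ref{maximum-support}. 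Your version avoids the paper's induction over powers of $A^T$ and is arguably tighter, at the cost of leaning more directly on the fixed-point identity $YA_k\hat r=\hat r_0 b$; the paper's version is more constructive and makes explicit how mass propagates along directed paths. Your handling of the $\hat v_0>0$ case (shifting mass along $(D^{-1}A^T\eta^+,-1)$ to get strict positivity directly, rather than taking a convex combination with the auxiliary point $\rho=(D^{-1}A^T\hat v+\eta^-\hat v_0,0)$) is a minor variation, and your closing remark that the $\hat r_0=0$ case is vacuous a posteriori is a nice observation the paper does not make.
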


\begin{proof}
	Let Problem \eqref{convex-fix} be parametrized with a fixed point $\hat\mu$,
	and let $(\hat v, \hat v_0)$ be both the minimizer and the fixed point.  We
	prove by contradiction that no entry of the minimizer $(\hat{v},\hat{v}_0)$
	can be zero. Observe that by the definition of $\Omega$ the origin is not
	contained in the set, and therefore the fixed point cannot be identically
	zero. 

	First, assume that $\hat{v}_0>0$ and observe that $\rho := (D^{-1}A^T\hat v + \eta^-\hat v_0, 0)$ is 
	a feasible point. Since $\eta^-$ was chosen to be positive and 
	$D^{-1}A^T$ has no zero columns, the support of $\rho$ are the first $n$ entries 
	of the vector. A convex combination of $(\hat v,\hat v_0)$ and $\rho$ will 
	be feasible and have full support.

	Now, assume that $\hat v_0 = 0$ and some entry of $\hat v$ is nonzero, and
	observe that $(D^{-1}A^T\hat{v},0)$ is feasible. A convex combination of
	$(D^{-1}A^T\hat{v},0)$ and $(\hat{v},0)$ is feasible and its support
	contains the union of the supports of the two vectors.  That is, for $\beta \in
	[0,1]$ the point  $(\tilde{v}, \tilde{v}_0) = \beta (D^{-1}A^T\hat{v},0) +
	(1-\beta)(\hat{v},0)$ is feasible, and using the fact that the support of
	$D^{-1}A^T\hat v$ is the support of $A^T	\hat v$ along with Lemma
	\ref{maximum-support},
	\begin{equation} 
		\left(\supp(\hat{v},0) \cup \supp(A^T\hat{v},0)\right) \subset \supp(\tilde{v},0).
	  \label{monotonesets} 
	\end{equation} 
	
	This relation can be used inductively to show that there is a feasible point
	with support at least as large as the union of the supports of
	$((A^T)^p\hat{v}, 0)$ for all positive powers of $p$.

	The single terminal-linkage class hypothesis implies that for any pair
	$(i,j)\in [1,\dots,n]\times [1,\dots,n]$, there exists a power $p$ large
	enough such that $(A^T)^p_{ij}>0$.  More importantly, if $\hat v_j > 0$, then
	there exists a $p$ such that $[(A^T)^p \hat v_j]_i > 0$ for all $i$.  
	Therefore, if $\hat{v}_0 = 0$ and $\hat{v} \neq 0$, there is a 
	feasible point $(\tilde{v},0)$ such that $\tilde{v} > 0$.  
	
	Finally, if $\hat v > 0$ then there is a scalar $0<\alpha$ small enough such
	that 
	\[
	0< \hat v - D^{-1}A^T\eta^+ \alpha,
	\]
	and then the equality  
	\[
		YD(\hat v - D^{-1}A^T\eta^+\alpha)+YA^T\eta^+\alpha = YD\hat v = YA^T\hat v
	\]
	implies that the positive point $(\hat v - D^{-1}A^T\eta^+\alpha,\alpha)$ is
	feasible.

	Therefore, if a network is formed by a single terminal-linkage class, the
	problem \eqref{convex-fix} has a positive feasible point $(\hat v, \hat
	v_0)$.  
\end{proof}

\begin{thm}\label{thm:scaling}
	For a mass conserving single terminal-linkage network, there exists a
	concentration $c>0$ such that $YA_k\psi(c) = b$ if and only if $b$ is in the
	range of $YA_k$. 
\end{thm}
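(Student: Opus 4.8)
The forward implication is immediate: if $c>0$ satisfies $YA_k\psi(c)=b$, then $b=YA_k\psi(c)$ is the image under $YA_k$ of the vector $\psi(c)$, so $b\in\range(YA_k)$.

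For the converse, suppose $b\in\range(YA_k)$ and write $b=YA_k\eta$, so that Problem \eqref{convex-fix} and the set $\Omega$ of Theorem \ref{fp-exist-map} are well defined. By Theorem \ref{fp-exist-map} the mapping \eqref{mapping} has a nontrivial fixed point $\hat\mu\in\Omega$; let $(v^\star,v_0^\star)$ be the associated minimizer and $y^\star$ the associated multiplier. Because the network is a single terminal-linkage class, Lemma \ref{positive-feasible} produces a strictly positive feasible point of Problem \eqref{convex-fix} parametrized by $\hat\mu$, and Lemma \ref{maximum-support} then forces $(v^\star,v_0^\star)>0$. At a strictly positive minimizer the objective of \eqref{convex-fix} is differentiable, so the optimality conditions \eqref{optcon} hold; since $D$ is nonsingular, the second of them is equivalent to \eqref{mak-alt}, so $c^\star:=e^{y^\star}>0$ satisfies $\psi(c^\star)=v^\star$. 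Being a fixed point, $\hat\mu$ also makes \eqref{scaled-fb} hold: $YA_kv^\star=v_0^\star b$. If $v_0^\star=1$ we are done, because then $YA_k\psi(c^\star)=YA_kv^\star=b$.

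It remains to eliminate the scalar $v_0^\star$ when $v_0^\star\neq1$, and this is where mass conservation and the single-linkage hypothesis enter. Transposing the mass-conservation identity $e^TYA_k=0$ gives $A_k^T(Y^Te)=\0$, so $Y^Te\in\ker A_k^T$. Since the network is a single terminal-linkage class the underlying graph is strongly connected, hence $A_k^T=A-D$ is irreducible with zero row sums and its kernel is one-dimensional, spanned by $\1$; as $e>0$ and $Y$ has no zero column, $Y^Te>0$, so $Y^Te=\lambda\1$ with $\lambda>0$. Set $w:=e/\lambda>0$, so $Y^Tw=\1$, and define $c:=e^{\,y^\star-(\log v_0^\star)\,w}>0$. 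Then $Y^T\log c=Y^Ty^\star-(\log v_0^\star)Y^Tw=\log v^\star-(\log v_0^\star)\1$, so $\psi(c)=v^\star/v_0^\star$ componentwise, whence $YA_k\psi(c)=YA_kv^\star/v_0^\star=b$ with $c>0$. This establishes the nontrivial direction.

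The essentially new ingredients are Theorem \ref{fp-exist-map} together with the two support lemmas (already in hand) and the rescaling above; the rest is bookkeeping with the optimality conditions. I expect the rescaling to be the subtle point: it uses $\1\in\range(Y^T)$, which is false for general networks but holds here because mass conservation forces $Y^Te$ into $\ker A_k^T$ while the single terminal-linkage (strong connectivity) hypothesis collapses that kernel to $\operatorname{span}(\1)$. With several terminal-linkage classes the kernel of $A_k^T$ is larger and this step would have to be reworked, consistent with the theorem being stated only for a single class.
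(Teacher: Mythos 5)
Your proposal is correct and takes essentially the same route as the paper: existence of a positive pair $(c^\star,v^\star)$ with $YA_kv^\star=v_0^\star b$ via Theorem \ref{fp-exist-map} and Lemmas \ref{maximum-support}--\ref{positive-feasible}, followed by the rescaling that exploits $Y^Te\in\mathcal{N}(A_k^T)=\operatorname{span}(\1)$ to put $\1$ in the range of $Y^T$ and absorb the factor $v_0^\star$. Your write-up is in fact slightly more explicit than the paper's on two small points (the trivial forward implication, and why the scalar $\lambda$ with $Y^Te=\lambda\1$ is nonzero), but there is no substantive difference in approach.
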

\begin{proof}
	We have shown that there exist positive vectors $c \in \R^m, v \in \R^n$ such
	that $Y^T \log(c) = \log (v)$ and $Y A_k v = v_0 b$.  In other words, we have
	proven that there is a positive vector $c$ and a positive scalar $\alpha$
	such that $Y A_k \psi(c) = \alpha b$.  If we can construct a new
	concentration vector $\tilde c > 0$ that satisfies $\psi(\tilde c) =
	\frac{1}{\alpha}\psi(c)$, then
	\[
		YA_k\psi(\tilde c) = \frac{1}{\alpha}YA_k\psi(c) = b
	\]
	and the steady state concentration $\tilde c$ satisfies \eqref{fb} and
	\eqref{mak}.  

	First, we argue that the vector of all ones, $\1 \in \R^n$, is in the range
	of $Y^T$ when the network consists of a single terminal-linkage class and
	is mass conserving. The condition of mass conservation implies that $e^T Y A_k = 0$, 
	or equivalently, $Y^T e \in \mathcal{N} (A_k^T)$.  Since $A_k^T$ is the 
	Laplacian matrix of a strongly connected graph, $\mathcal{N}(A_k) = \{\beta
	\1 :\; \beta \in \R\}$; thus, for some value $\hat \beta$, $Y^T e = \hat\beta
	\1$. 
	
	Now, observe that $\log\left(\frac{1}{\alpha}\psi(c) \right) =
	\log(\psi(c))-\1\log(\alpha) = Y^T\log(c) - \1\log(\alpha)$, where $\log(c)$
	is an entry-wise logarithm of the vector and $\log(\alpha)$ the scalar
	logarithm. Moreover, since $\1$ is in the range of $Y^T$, say $Y^T \delta =
	\1$ for some $\delta \in \R^m$, then $\log(\alpha)\1 =
	Y^T(\log(\alpha)\delta)$. Thus, if we define $\tilde c$ as the vector that
	satisfies  $\log(\tilde c) = \log(c) - \log(\alpha)\delta$ , then 
	\[
	Y^T\log(\tilde c) = Y^T\log(c) - \log(\alpha)\1,
	\] 
	which implies that 
	\[
	\psi(\tilde c) = \frac{1}{\alpha}\psi(c).
	\] 
	Therefore, the inhomogeneous system has a solution if the underlying graph of
	the network is formed by a single terminal-linkage class and the network is 
	mass conserving, regardless of the kinetic parameters.

	\end{proof}

\section{Numerical Experiments}

The results in this section depend on the calculation of positive steady states
for weakly reversible networks. We use the algorithm described below to solve
for the associated fixed points.

\subsection{Numerical method for finding fixed points}

Given an initial positive point $\hat v \in \R^{n}$ and a small tolerance
$\tau$, we use the following fixed point iteration, Algorithm
\ref{fp-iteration}, to find a parameter $\mu = (r, r_0)$ to the problem
\eqref{convex-fix} such that $(v^\star,v_0^\star) = (r,r_0)$.

\begin{algorithm}
\caption{Fixed Point Iteration to Find a Steady-State Concentration}
\label{fp-iteration}
\begin{algorithmic}[1]
  \STATE $(v^\star,v_0^\star) \gets (\hat v,0)$
  \STATE $(r,r_0) \gets (v^\star,v_0^\star)$
  \WHILE{$\|YA_k v^\star - (Y A_k \eta^+ + YD \eta^-) v_0^\star \|_\infty>\tau$}
  \STATE $(v^\star,v^\star_0)\gets \text{unique solution of \eqref{convex-fix}} $
		\label{min-step}
	\STATE $(r,r_0) \gets \frac{1}{2}(r,r_0) +\frac{1}{2}(v^\star,v^\star_0)$
  \ENDWHILE
  \label{fixpoint-alg}.
\end{algorithmic}
\end{algorithm}

Step \ref{min-step} in the while loop of Algorithm \ref{fp-iteration} requires
solving the linearly constrained convex optimization problem
\eqref{convex-fix}. Our implementation uses the PDCO package \cite{pdco} to
solve this problem.

Provided that at each iteration $k$, the unique solution of \eqref{convex-fix}
satisfies $v^\star(\mu_k)>0$ and the minimization is solved with sufficient
accuracy, the optimality conditions for \eqref{convex-fix} will imply that for
all $k = 1, 2, 3, \ldots$,
\[
	\|Y^T\lambda^\star-\log(v^\star(\mu_k))\|_\infty \leq \epsilon,
\] 
for some small value of $\epsilon$, where $\lambda^\star$ is the Lagrange
multiplier of the linear equality constraint at the solution that corresponds
to the logarithm of the concentrations. Thus, if the iteration
converges to a fixed point $(r,r_0) = (v^\star,v_0^\star)$, then at this fixed
point \eqref{mak} will be satisfied to a precision $\epsilon$ and \eqref{fb}
will be satisfied to a precision $\tau$.

Algorithm \ref{fp-iteration} has been tested extensively on randomly generated
networks with noteworthy success. The results of our experiments are shown in
Section \ref{scn:convergence}.

\subsection{An example network}
\label{example-network}

\begin{figure}[!htbp]
\centering
	\includegraphics[width=3in]{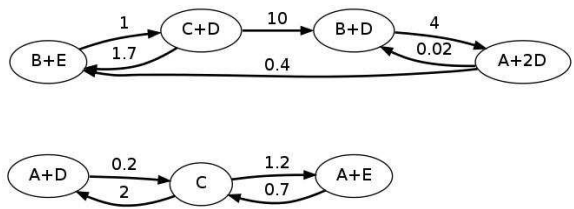}
	\caption{Example network with two terminal-linkage classes}
	\label{fig:network-small}
\end{figure}

In this section we consider the toy network shown in Figure
\ref{fig:network-small}.  The number of complexes is $n = 7$, the number of
terminal-linkage classes is $l = 2$, and the stoichiometric subspace $S =
\operatorname{span}\{ A+E-C, C-A-D, B-C\}$ has dimension $s = 3$.  Therefore,
the deficiency of this network is $\delta = 7 - 2 - 3 = 2$, and hence neither
of the Deficiency 0-1 theorems \cite{deficiency0, deficiency1} can be applied
to calculate equilibrium points. 

However, since this network is weakly reversible, intuition suggest that a
non-zero steady state exists.  We use Algorithm \ref{fp-iteration} to solve for
the fixed point described in Section \ref{section:fp-model}, obtaining a
positive steady state.  Figure \ref{fig:ConcentrationVsIteration} illustrates
the convergence of the fixed point iterations to the steady state.\\

\begin{figure}[!htbp]
	\includegraphics[width=4.5in]{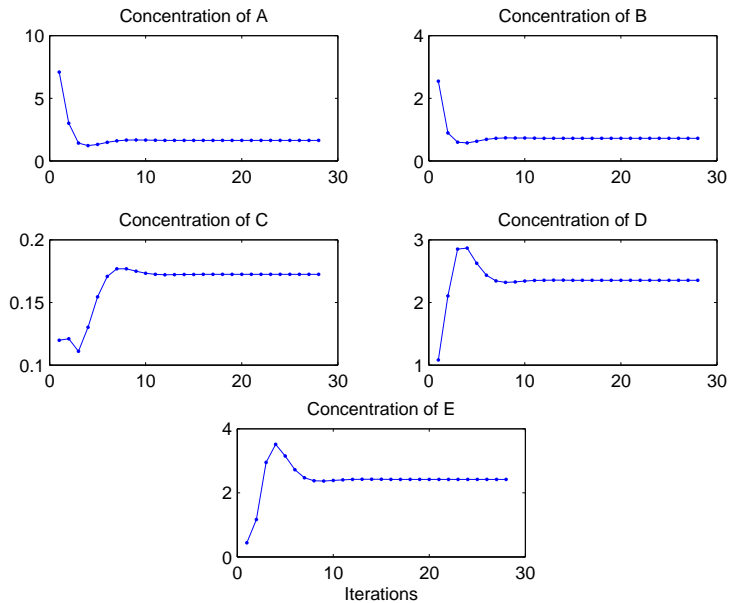}
\caption{Concentration convergence with iteration}
\label{fig:ConcentrationVsIteration}
\end{figure}

Figure \ref{EquilibriumVsTotalMass} illustrates the change in steady state as a
function of the \emph{total mass} in the system, where total mass is defined as
$\gamma = e^T(YD v^\star + YA^T \eta^+ v^\star_0)$, as described in Theorem
\ref{fp-exist-map}.  The experiment shows that as the total mass increases,
species $A$, $B$ and $C$ adjust linearly to the additional mass, while species
$D$ and $E$ stay at the same levels.  This linear growth in species $A, B$ and
$C$ can be explained analytically by the fact that the vector $\mathbf{1}$ lies
in range of $Y^{T}$. 

\begin{figure}[!htbp]
  \includegraphics[width=4.5in]{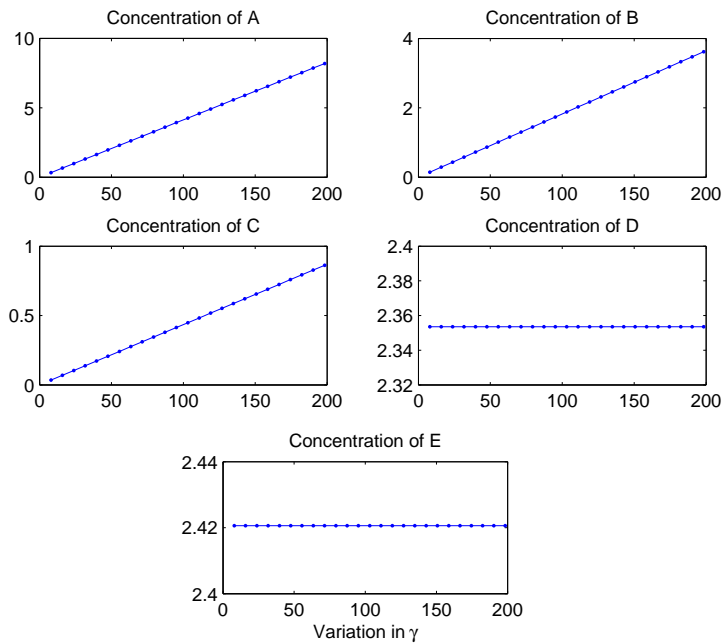}
	\caption{Equilibrium dependence on total mass} 
	\label{EquilibriumVsTotalMass}
\end{figure}

\subsection{Generating suitable networks}

This section describes the sampling scheme used to generate random mass
conserving chemical reaction networks with a prescribed number of
terminal-linkage classes.  The output of the method is a network with $n$
complexes, $m$ species, and $\ell$ strongly connected components, where $\ell$
is the desired number of terminal-linkage classes.

First, we iteratively generate Erd\H{o}s-R\'{e}yni graphs%
\footnote{An Erd\H{o}s-R\'{e}yni graph is a directed unweighted graph. Each
edge is included with probability $p$ and all edges are sampled \emph{iid}.} %
with $m$ nodes until we sample a graph with $\ell$ strongly connected
components; call this graph $\hat G( \hat V, \hat E)$.  We give each edge in 
$\hat E$ a weight of an independent and uniformly distributed value in the
range $(0,10]$.  These edge weights represent the reaction rates between
complexes.

To generate the stoichiometry, we define a parameter $r$ as the maximum 
number of species in each complex. Each complex is constructed with a random
sample of $q$ species, where $q$ is a random integer in $[1,r]$.  All samples
are done uniformly and independently.  Finally, we assign the multiplicity of
each species in a complex with independent samples of the absolute value of a
standard normal unit variance distribution. To ensure mass is conserved, we
normalize the sum of the stoichiometry of the species that participate in a
complex to one, so that $Y^T\1 = \1$ and $A_k^TY^T \1 = \0$. 

\subsection{Convergence of the Fixed Point Algorithm}
\label{scn:convergence} 

This section illustrates the convergence of Algorithm \ref{fp-iteration} on
large networks that consist of either a single terminal-linkage class 
or multiple terminal-linkage classes, i.e., weakly reversible networks.

Algorithm \ref{fp-iteration} produces sequences that, up to a small tolerance,
satisfy \eqref{mak} at every iteration. Ideally, the infeasibility with 
respect to \eqref{fb} also monotonically decreases until convergence.  Our
extensive numerical experiments indicate that this is in fact the behavior for
homogeneous networks that are weakly reversible.

\begin{figure}[!htbp] 
	\centering 
	\includegraphics{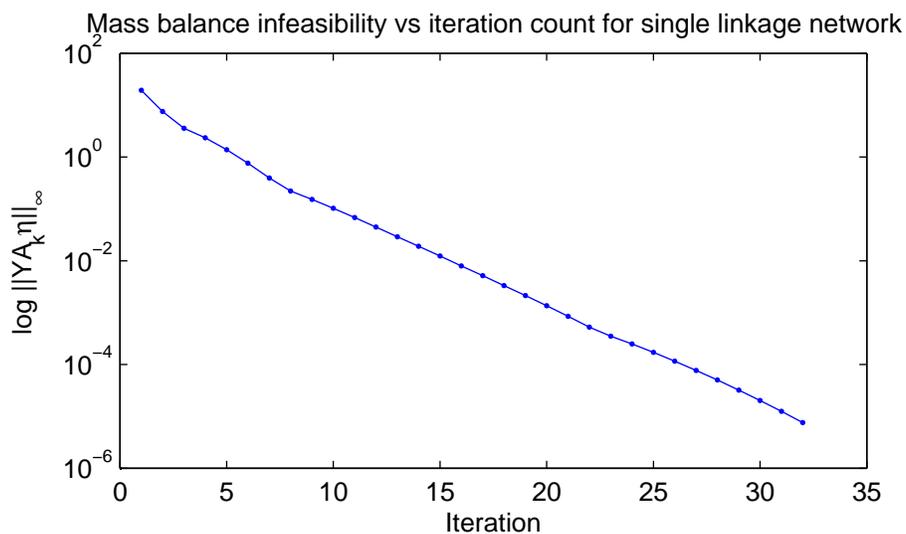}
  \caption{Typical Infeasibility of \eqref{fb} vs.\ Iteration, for network with a
  single terminal-linkage class} 
	\label{fig:typical-infeas-single} 
\end{figure}

\begin{figure}[!htbp]
  \centering
	\includegraphics{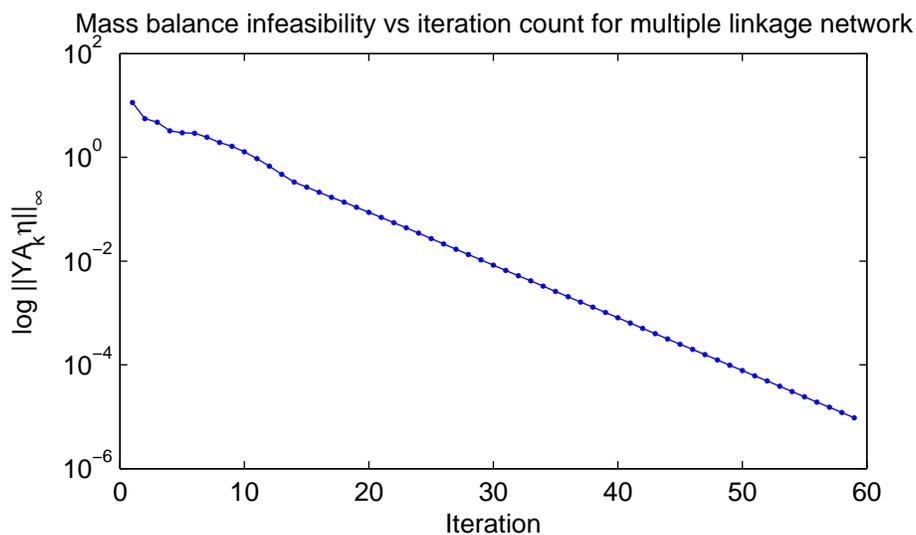}
	\caption{Typical Reduction in Infeasibility of \eqref{fb}, for network with
	two terminal-linkage classes}
	\label{fig:typical-infeas-multiple}
\end{figure}

Figure \ref{fig:typical-infeas-single} displays the sequence of the
infeasibilities $\|YA_kv_k\|_\infty$ at each iteration $k$ in Algorithm 
\ref{fp-iteration}, for a network with a single terminal-linkage class, $50$
species and $500$ complexes, where at most $10$ species participate in each
complex. Figure \ref{fig:typical-infeas-multiple} displays the analogous
sequence for a network of equal size and two terminal-linkage classes.  We have
observed this (apparently linear) convergence rate consistently over all
generated networks, regardless of the number of terminal-linkage classes.

\begin{figure}[!htbp] 
	\centering
  \includegraphics{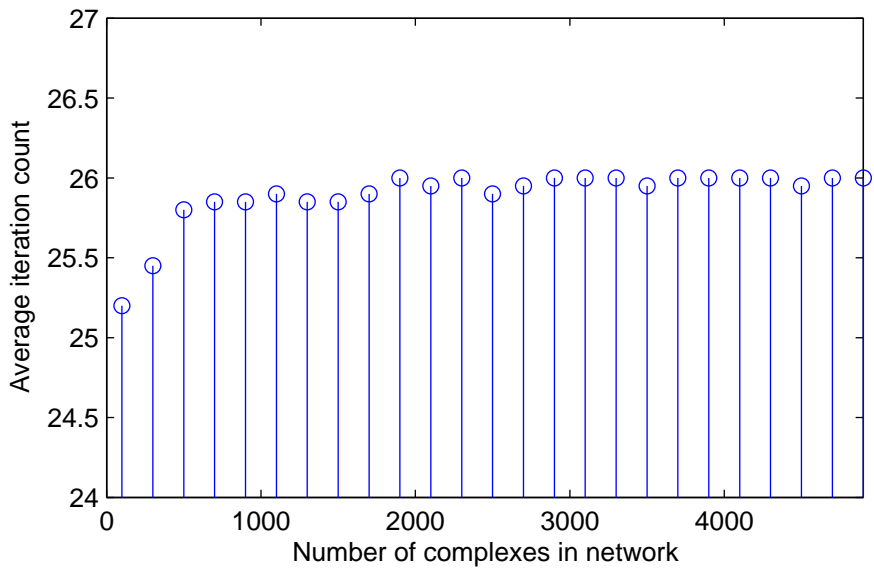} 
	\caption{Average number of iterations for single terminal-linkage class
		networks}
  \label{fig:iteration-count-simple} 
\end{figure}

\begin{figure}[!htbp] 
	\centering
  \includegraphics{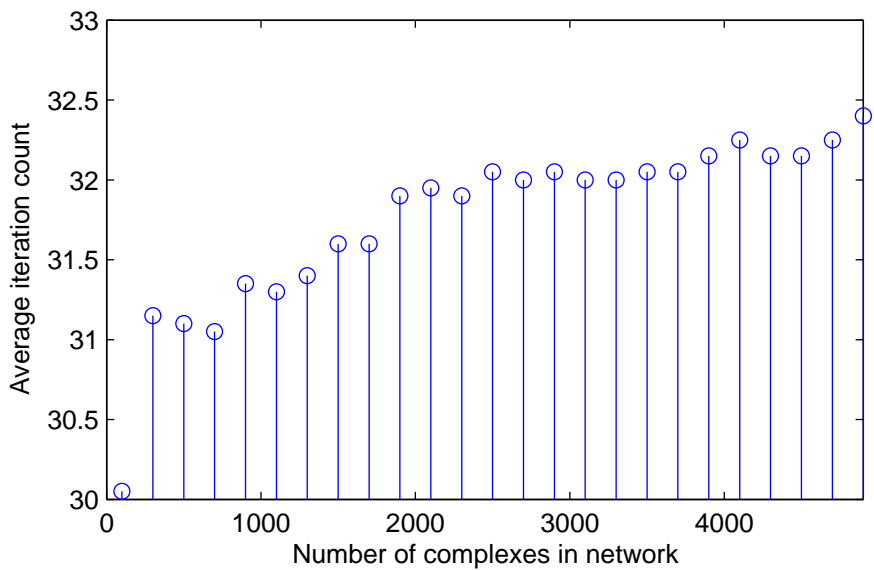} 
	\caption{Average number of iterations for networks with two terminal-linkage
		classes}
  \label{fig:iteration-count-multiple} 
\end{figure}

We have also investigated the number of iterations necessary for Algorithm
\ref{fp-iteration} to converge on networks of different sizes, with either one
or two terminal-linkage classes.  Figures \ref{fig:iteration-count-simple} and
\ref{fig:iteration-count-multiple} display the mean number of iterations
necessary for convergence on networks ranging from $100$ to $5000$ complexes,
where each average is taken over $20$ instances per network size. Notably, the
average number of iterations increases less than $10\%$ as the network size
grows fifty-fold.

In future work, we plan to prove theoretical results on the existence of
positive equilibria for chemical reaction networks with multiple
terminal-linkage classes.  However, our comprehensive numerical experiments
seem to indicate that even for networks with more than one terminal-linkage
class, there exists at least one positive fixed point of problem
\eqref{convex-fix}, and the iterates of Algorithm \ref{fp-iteration} converge
to such a fixed point.

Finally, while conducting this work, we were made aware of related work by Deng
et al.\ \cite{Deng}. Their work extends that of Feinberg and Horn by proving
that weak reversibility is a necessary and sufficient condition for existence
of positive equilibria. While their proof is for closed systems (with $b=0$),
it is not clear how to use the $0$-\emph{complex} in their extension to
solve for any admissible $b$, as mentioned in Section \ref{background}.  More
importantly, our proof is based on a less complicated convex optimization
formulation, which gives a method to calculate numerical solutions using a
fixed point algorithm. \\

\hspace{-0.2in}%
\textbf{Acknowledgment}: We gratefully acknowledge Anne Shiu and her student
for helpful discussions and directing us to \cite{Deng}.

\frenchspacing
\nocite{*} 
\bibliographystyle{plain} 
\bibliography{fixpoint}{}

\end{document}